\documentclass[a4paper,12pt]{article}
\usepackage[latin1]{inputenc}
\usepackage[T1]{fontenc}
\usepackage{amsmath, amsthm, amssymb}
\usepackage{array}

\usepackage[final]{graphicx}   
\graphicspath{{tikz/}}

\newtheorem{definition}{Definition}[section]
\newtheorem{theorem}[definition]{Theorem}
\newtheorem{lemma}[definition]{Lemma}
\newtheorem{remark}[definition]{Remark}

\newtheorem{example}[definition]{Example}

\newtheorem{proposition}[definition]{Proposition}
\setlength{\parindent}{0pt}
\begin{document}

\pagenumbering{roman}
\pagestyle{empty} 
\title{Complete MDP convolutional codes}
\author{Julia Lieb}
\maketitle



  
\setcounter{page}{1}


\pagenumbering{arabic}
\pagestyle{plain}

\abstract{Maximum distance profile (MDP) convolutional codes have the property that their column distances are as large as possible. It has been shown that, transmitting over an erasure channel, these codes have optimal recovery rate for windows of a certain length. Reverse MDP convolutional codes have the additional advantage that they are suitable for forward and backward decoding algorithms. Beyond that the subclass of complete MDP convolutional codes has the ability to reduce the waiting time during decoding. The first main result of this paper is to show the existence and genericity of complete MDP convolutional codes for all code parameters. The second main contribution is the presentation of two concrete construction techniques to obtain complete MDP convolutional codes. These constructions work for all code parameters but require that the size of the underlying base field is (sufficiently) large.}

\section{Introduction}

Convolutional codes play an important role for digital communication. When considering the erasure channel, which is the most used channel in multimedia traffic, these codes can correct more errors than the classical block codes. 
An erasure channel is a communication channel where the receiver knows if a
received symbol is correct since symbols either arrive correctly or are erased. A prominent example of an erasure channel is the Internet.
The advantage of convolutional codes is the flexibility of grouping the blocks of information in an appropriate way, depending on the erasures location, and then decode the "easy" part of the sequence first, i.e., the part of the sequence with less erasures or where the distribution of erasures allows a
complete correction.

Besides the classical free distance, convolutional codes possess a different notion of distance, called column distance. The column distances of a convolutional code are limited by an upper bound, which was proven in \cite{RS99}. Convolutional codes attaining these bounds, i.e. convolutional codes whose column distances increase as rapidly as possible for as long as possible are called maximum distance profile (MDP) codes. These codes were introduced in \cite{strongly} and are especially suitable for the
use in sequential decoding algorithms. The ability of considering a part of the sequence ("window") of any size and slide this window along the transmitted sequence allows to optimize the number of corrected errors. This feature was firstly discussed in \cite{vp} where the authors showed that MDP convolutional codes have optimal recovery rate for windows of a certain length (depending on the code parameters). Moreover, they considered reverse MDP convolutional codes, which have the advantage that forward and backward decoding algorithms could be used. Finally, complete MDP convolutional codes, which are again a subclass of reverse MDP convolutional codes, have the additional benefit that there is less waiting time when a large burst of erasures occurs and no correction is possible for some time \cite{vp}.

While the existence (and genericity) of reverse MDP convolutional codes for all code parameters has been shown in \cite{vp}, the existence of complete MDP convolutional codes for all code parameters was only conjectured. In this paper, we will prove this conjecture and even show that the property to be complete MDP is a generic property. 

General constructions for all code parameters are only known for MDP convolutional codes \cite{strongly}, \cite{dr13}, but not for reverse and complete MDP convolutional codes.
In this paper, we present two general construction techniques for complete MDP convolutional codes.

The paper is structured as follows. In Section 2, we start with preliminaries about convolutional codes, introduce the notion of column distances as well as MDP and reverse MDP convolutional codes. In Section 3, we examine complete MDP convolutional codes. In the first subsection, we show the existence and genericity of complete MDP convolutional codes for all code parameters and in the second subsection, we present two possibilities to obtain a general construction for complete MDP convolutional codes.

\section{Convolutional Codes}

In this section, we summarize the basic definitions and properties concerning convolutional codes.
One way to define a convolutional code is via polynomial generator matrices.

\begin{definition}\ \\
A \textbf{convolutional code} $\mathfrak{C}$ of \textbf{rate} $k/n$ is a free $\mathbb F[z]$-submodule of $\mathbb F[z]^n$ of rank $k$.
There exists $G\in\mathbb F[z]^{n\times k}$ of full column rank such that
$$\mathfrak{C}=\{v\in\mathbb F[z]^n\ |\ v(z)=G(z)m(z)\ \text{for some}\ m\in\mathbb F[z]^k\}.$$
$G$ is called \textbf{generator matrix} of the code and is unique up to right multiplication with a unimodular matrix $U\in Gl_k(\mathbb F[z])$.\\
The \textbf{degree} $\delta$ of $\mathfrak{C}$ is defined as the maximal degree of the $k\times k$-minors of $G$.
Let $\delta_1,\hdots, \delta_k$ be the column degrees of $G$. Then, $\delta\leq\delta_1+\cdots+\delta_k$ and if $\delta=\delta_1+\cdots+\delta_k$, $G$ is called a \textbf{minimal} generator matrix.
\end{definition}

\begin{remark}\cite[Theorem 2.2]{Fu-He15}\ \\
$G$ is a minimal generator matrix for $\mathfrak{C}$ if and only if it is column proper.
\end{remark}

There is a generic subclass of convolutional codes that could not only be described by an image representation via generator matrices but also by a kernel representation via so-called parity-check matrices, which will be introduced in the following. Therefore, we need the notion of right prime and left prime polynomial matrices.

\begin{definition}\ \\
Let $\overline{\mathbb F}$ denote the algebraic closure of $\mathbb F$.
A polynomial matrix $G\in\mathbb F[z]^{n\times k}$ with $k<n$ is called \textbf{right prime} if it has full column rank for all $z\in\overline{\mathbb F}$. For $k>n$, it is called \textbf{left prime} if it has full row rank for all $z\in\overline{\mathbb F}$. 
\end{definition}

The property to have right prime generator matrices is important for the decoding of a convolutional code.

\begin{definition}\ \\
A convolutional code $\mathfrak{C}$ is called \textbf{non-catastrophic} if one and therefore, each of its generator matrices is right prime.
\end{definition}

 
Non-catastrophic convolutional codes have the property that a finite number of transmission errors could only cause a finite number of decoding errors. Moreover, they have the desired image representation mentioned above.

\begin{definition}\ \\
If $\mathfrak{C}$ is non-catastrophic, there exists a so-called \textbf{parity-check matrix} $H \in\mathbb F[z]^{(n-k)\times n}$ of full rank, such that
$$\mathfrak{C} =\{ v\in \mathbb F[z]^n\ |\ H(z)v(z) = 0 \in \mathbb F[z]^{n-k}\}.$$
Clearly, a parity-check matrix of $\mathfrak{C}$ is not unique and it is possible to choose it left prime and row proper. In this case, the sum of the row degrees of $H$ is equal to the degree $\delta$ of $\mathfrak{C}$ \cite{con}.
\end{definition}

We will need this representation by parity-check matrices to define complete MDP convolutional codes in the following section.
In the remaining part of this section, we want to introduce MDP convolutional codes, for which we have to consider distances of convolutional codes first.

\begin{definition}\ \\
The \textbf{Hamming weight} $wt(v)$ of $v\in\mathbb F^n$ is defined as the number of its nonzero components.\\
For $v\in\mathbb F[z]^n$ with $\deg(v)=\gamma$, write $v(z)=v_0+\cdots+v_{\gamma}z^{\gamma}$ with $v_t\in\mathbb F^n$ for $t=0,\hdots,\gamma$ and set $v_t=0\in\mathbb F^n$ for $t\geq\gamma+1$. Then, for $j\in\mathbb N_0$, the \textbf{j-th column distance} of a convolutional code $\mathfrak{C}$ is defined as
$$d_j^C(\mathfrak{C}):=\min_{v\in\mathfrak{C}}\left\{\sum_{t=0}^j wt(v_t)\ |\ v\not\equiv 0\right\}.$$
Moreover, $d_{free}(\mathfrak{C}):=\lim_{j\rightarrow\infty}d_j^C(\mathfrak{C})$ is called the \textbf{free distance} of $\mathfrak{C}$.
\end{definition}


There exist upper bounds for the free distance and for the column distances of a convolutional code.

\begin{theorem}\cite{RS99}\cite{strongly}\label{ub}
\begin{itemize}
\item[(i)] $d_{free}\leq(n-k)\left(\left\lfloor\frac{\delta}{k}\right\rfloor+1\right)+\delta+1$ (generalized Singleton bound)
\item[(ii)]
 $d_j^C (\mathfrak{C}) \leq (n-k)(j + 1) + 1$
\end{itemize}
\end{theorem}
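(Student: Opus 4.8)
The plan is to obtain both inequalities as applications of the classical Singleton bound, each to a suitable finite-dimensional block code carved out of $\mathfrak C$. Throughout write $G(z)=\sum_{i\ge 0}G_iz^i$ with $G_i\in\mathbb F^{n\times k}$, and recall that for a codeword $v=Gm$ the coefficient blocks satisfy $v_t=\sum_{i=0}^{t}G_im_{t-i}$, so that $(v_0,\dots,v_j)$ depends only on $(m_0,\dots,m_j)$. For the column-distance bound (ii) I would work with a truncated window of the code, while for the free-distance bound (i) I would work with the codewords of bounded degree; in both cases the substance is a dimension count feeding into Singleton.

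For (ii), the relevant object is the truncated sliding generator matrix
\[
G_j^c=\begin{pmatrix} G_0 & & \\ G_1 & G_0 & \\ \vdots & & \ddots \\ G_j & \cdots & G_0 \end{pmatrix}\in\mathbb F^{(j+1)n\times(j+1)k},
\]
which is block-lower-triangular with $G_0=G(0)$ on the diagonal and satisfies $(v_0,\dots,v_j)^{\top}=G_j^c\,(m_0,\dots,m_j)^{\top}$. Choosing the generator matrix right prime (so $G_0$ has full column rank $k$), I would construct \emph{one} nonzero codeword of small window weight explicitly, rather than invoke Singleton on the whole window, which would not keep the leading block nonzero. First pick $m_0\neq 0$ annihilating $k-1$ prescribed coordinates of $v_0=G_0m_0$ — possible because the corresponding $(k-1)\times k$ system has a nontrivial kernel — so that $v_0\neq 0$ yet $wt(v_0)\le n-k+1$. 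Then, with $m_0$ fixed, solve for $m_1,\dots,m_j$ so as to annihilate $jk$ further coordinates spread over the blocks $v_1,\dots,v_j$; this is solvable because the $jn\times jk$ block-Toeplitz matrix governing $(m_1,\dots,m_j)\mapsto(v_1,\dots,v_j)$ again carries $G_0$ on its diagonal and hence has an invertible $jk\times jk$ submatrix. The resulting codeword is nonzero and has window weight $\le(n-k+1)+j(n-k)=(n-k)(j+1)+1$, which yields (ii).

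For (i), I would take a minimal (column proper) generator matrix with Forney indices $\delta_1,\dots,\delta_k$, $\sum_i\delta_i=\delta$, set $\mu=\lfloor\delta/k\rfloor$, and note $\min_i\delta_i\le\mu$. Consider the codewords $v=Gm$ with $\deg m_i\le\mu-\delta_i$ whenever $\delta_i\le\mu$ (and $m_i=0$ otherwise); each such $v$ has degree $\le\mu$ and is therefore determined by $(v_0,\dots,v_\mu)\in\mathbb F^{(\mu+1)n}$. Since $G$ has full column rank, the map from the free message coefficients to $(v_0,\dots,v_\mu)$ is injective, so its image is a block code of length $N=(\mu+1)n$ and dimension $K=\sum_{i:\delta_i\le\mu}(\mu-\delta_i+1)$, every nonzero word of which is a nonzero codeword of $\mathfrak C$ and hence has weight $\ge d_{free}$. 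Singleton then gives $d_{free}\le N-K+1$, and the bound follows after a short rewriting that uses that each index with $\delta_i>\mu$ contributes at least $\mu+1$ to $\delta$, i.e. $\sum_{i:\delta_i>\mu}\delta_i\ge(\mu+1)\,|\{i:\delta_i>\mu\}|$, turning $N-K+1$ into exactly $(n-k)(\mu+1)+\delta+1$.

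The genuinely delicate points are the constraints that make a naive Singleton argument fail. In (ii) it is the requirement $v_0\neq 0$: one cannot simply bound the minimum distance of the whole windowed code, and the explicit two-stage annihilation — together with the fact that $G(0)$ has full column rank — is precisely what keeps the leading block nonzero while still saving $k-1$ coordinates there. In (i) the main obstacle is the choice of truncation degree $\mu=\lfloor\delta/k\rfloor$ and checking that the dimension count survives \emph{unbalanced} Forney indices; it is the inequality $\sum_{i:\delta_i>\mu}\delta_i\ge(\mu+1)\,|\{i:\delta_i>\mu\}|$ that reconciles the Singleton estimate with the claimed bound independently of how the degree $\delta$ is distributed among the columns.
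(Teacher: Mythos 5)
The paper never proves Theorem \ref{ub}; it imports it verbatim from \cite{RS99} (part (i)) and \cite{strongly} (part (ii)), so there is no internal proof to compare yours against. Judged on its own, your argument is correct, and it is essentially the classical argument from those references: (ii) is the standard two-stage annihilation proof of the column-distance bound, and (i) is the Rosenthal--Smarandache truncation-plus-Singleton proof, including the necessary check $K\ge 1$ via $\min_i\delta_i\le\lfloor\delta/k\rfloor$. Two details deserve sharpening. First, in (ii) the solvability of the second stage should be anchored concretely: fix once and for all $k$ rows of $G_0$ forming an invertible $k\times k$ submatrix and annihilate exactly those coordinates in each of $v_1,\dots,v_j$; then the $jk\times jk$ system is block lower triangular with invertible diagonal blocks, hence uniquely solvable whatever affine contribution $m_0$ makes. (Your standing hypothesis that $G_0=G(0)$ has full column rank is exactly what right primeness supplies; this delay-free normalization is also what makes column distances meaningful at all -- note the paper's own definition, requiring only $v\not\equiv 0$, would force every $d_j^C$ to vanish, since $z^{j+1}u$ is a codeword with zero window weight; you correctly work with the standard requirement $v_0\neq 0$.) Second, in (i) the word ``exactly'' should be ``at most'': one has
\[
N-K+1=(n-k)(\mu+1)+\delta+1+\Bigl((\mu+1)\,\bigl|\{i:\delta_i>\mu\}\bigr|-\sum_{i:\delta_i>\mu}\delta_i\Bigr)\le (n-k)(\mu+1)+\delta+1,
\]
with equality if and only if $\delta_i=\mu+1$ for every $i$ with $\delta_i>\mu$; this is precisely what your inequality $\sum_{i:\delta_i>\mu}\delta_i\ge(\mu+1)\,\bigl|\{i:\delta_i>\mu\}\bigr|$ delivers, and the inequality runs in the direction the bound needs. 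With these clarifications both halves are complete.
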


The bound in part (i) of the preceding theorem is called generalized Singleton bound since for $\delta=0$ one gets the Singleton bound for block codes.\\
We are interested in convolutional codes with good distance properties, i.e. in those codes that reach certain bounds of the preceding theorem.

\begin{definition}\cite{mdp}\ \\
A convolutional code $\mathfrak{C}$ of rate $k/n$ and degree $\delta$ is called\\
 (i) \textbf{maximum distance separable (MDS)} if 
$$d_{free}(\mathfrak{C})=(n-k)\left(\left\lfloor\frac{\delta}{k}\right\rfloor+1\right)+\delta+1,$$
(ii) of \textbf{maximum distance profile (MDP)} if 
$$d_j^C(\mathfrak{C})=(n-k)(j+1)+1\quad \text{for}\ j=0,\hdots,L:=\left\lfloor\frac{\delta}{k}\right\rfloor+\left\lfloor\frac{\delta}{n-k}\right\rfloor$$
(iii) \textbf{strongly maximum distance separable (sMDS)} if 
$$d_M^C(\mathfrak{C})=(n-k)\left(\left\lfloor\frac{\delta}{k}\right\rfloor+1\right)+\delta+1\ \  \text{where}\ \  M=\left\lfloor\frac{\delta}{k}\right\rfloor+\left\lceil\frac{\delta}{n-k}\right\rceil.$$
\end{definition}

As mentioned in the introduction, MDP convolutional codes have the property that their column distances increase as rapidly as possible for as long as possible. Due to the generalized Singleton bound, $j=L$ is indeed the largest possible value for which $d_j^C$ can attain the upper bound from Theorem \ref{ub} (ii). Moreover, the following remark shows that it is sufficient to have equality for $j=L$ in part (ii) of Theorem  \ref{ub} to get an MDP convolutional code.

\begin{remark}\cite{strongly}\ \\
If $d_j^C(\mathfrak{C})=(n-k)(j+1)+1$ for some $j\in\{1,\hdots,L\}$, then $d_i^C(\mathfrak{C})=(n-k)(i+1)+1$ for all $i\leq j$.
\end{remark}

The next remark points out the relationship between MDP, MDS and sMDS convolutional codes.

\begin{remark}\cite{mdp}\ \\
(i) Each sMDS code is an MDS code.\\
(ii) If $n-k$ divides $\delta$, a convolutional code $\mathfrak{C}$ is MDP if and only if it is sMDS.
\end{remark}

In the following, we will provide criteria to check whether a convolutional code is of maximum distance profile. Therefore, we need the notion of trivially zero determinants.\\


\begin{definition}\ \\
Let $n,m\in\mathbb N$ and $A\in\mathbb F^{n\times m}$ be a matrix with the property that each of its entries is either fixed to zero or a free variable from $\mathbb F$. Its determinant $\det(A)$ is called trivially zero if it is zero for all choices for the free variables in $A$.
\end{definition}

\begin{theorem}\cite{strongly}\ \\
Let the convolutional code $\mathfrak{C}$ be generated by a right prime minimal polynomial matrix $G(z)=\sum_{i=0}^{\mu}G_iz^i\in\mathbb F[z]^{n\times k}$ and have the left prime and row proper parity-check matrix $H(z)=\sum_{i=0}^{\nu}H_iz^i\in\mathbb F[z]^{(n-k)\times n}$. Equivalent are:
\begin{itemize}
\item[(i)] $\mathfrak{C}$ is of maximum distance profile.
\item[(ii)] $\mathcal{G}_L:=\left[\begin{array}{ccc} G_0 & & 0\\ \vdots & \ddots &  \\ G_L & \hdots & G_0 \end{array}\right]$ where $G_i\equiv 0$ for $i>\mu$ has the property that every full size minor that is not trivially zero, i.e. zero for all choices of $G_1,\hdots,G_L$, is nonzero.
\item[(iii)] $\mathcal{H}_L:=\left[\begin{array}{ccc} H_0 & & 0\\ \vdots & \ddots &  \\ H_L & \hdots & H_0 \end{array}\right]$ where $H_i\equiv 0$ for $i>\nu$ has the property that every full size minor that is not trivially zero is nonzero.
\end{itemize}
\end{theorem}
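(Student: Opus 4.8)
The plan is to route everything through the \emph{window matrix} $\mathcal{G}_L$ and the truncation identity
$$\begin{pmatrix} v_0 \\ \vdots \\ v_L\end{pmatrix} = \mathcal{G}_L \begin{pmatrix} m_0 \\ \vdots \\ m_L\end{pmatrix},$$
which holds for any $v = Gm \in \mathfrak{C}$ because the coefficients of $v$ up to degree $L$ depend only on $G_0,\dots,G_L$ and $m_0,\dots,m_L$. First I would record that right primeness forces $G_0 = G(0)$ to have full column rank, so that $v_0 \neq 0 \iff m_0 \neq 0$, and that by shift-invariance together with the monotonicity remark the minimum defining $d_L^C$ may be taken over codewords with $v_0 \neq 0$, i.e.\ over $m_{[0,L]}$ with $m_0 \neq 0$. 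Writing $wt(v_{[0,L]}) = (L+1)n - \#\{\text{zero coordinates}\}$, the bound $d_L^C \le (n-k)(L+1)+1$ becomes an equality exactly when no nonzero $m_{[0,L]}$ with $m_0 \neq 0$ produces a window $\mathcal{G}_L m_{[0,L]}$ that vanishes on a set $S$ of $(L+1)k$ coordinates.

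For (i) $\Leftrightarrow$ (ii) I would rewrite ``$v_{[0,L]}$ vanishes on $S$'' as $\mathcal{G}_L[S]\,m_{[0,L]} = 0$, where $\mathcal{G}_L[S]$ is the square $(L+1)k \times (L+1)k$ submatrix on the rows indexed by $S$, so that $\det\mathcal{G}_L[S]$ is precisely the minor of (ii). The technical core is a combinatorial lemma on the lower-triangular block-Toeplitz pattern: writing $R_0,\dots,R_L$ for the $n$-row blocks and $r_\ell = |S\cap R_\ell|$, column-block $c$ is structurally nonzero only in the row-blocks $R_c,\dots,R_L$, so if the Hall-type inequality $\sum_{i=0}^{\ell} r_i \le (\ell+1)k$ fails for some $\ell$ then more rows than available columns are confined to the column-blocks $0,\dots,\ell$ and $\det\mathcal{G}_L[S]$ is structurally (trivially) zero; the converse, that the Hall condition yields a nonvanishing diagonal in the Leibniz expansion, is the substance of the lemma. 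With this in hand the equivalence splits into two directions. If some non-trivially-zero minor vanishes, the Hall condition has to be used to produce a kernel vector with leading block $m_0 \neq 0$, giving a window codeword with $v_0\neq 0$ and $(L+1)k$ zeros, so $d_L^C$ falls below the bound. Conversely, if $d_L^C$ is not maximal then some window codeword with $v_0\ne0$ vanishes on at least $(L+1)k$ coordinates, and a matroid/greedy selection among these coordinates must be shown to extract a size-$(L+1)k$ set $S$ satisfying the Hall condition, whose (non-trivially-zero) minor is then singular.

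For (ii) $\Leftrightarrow$ (iii) I would pass to the parity-check side through the identity $\mathrm{im}\,\mathcal{G}_L = \ker \mathcal{H}_L$. The inclusion $\subseteq$ is immediate, since $Hv = 0$ forces the degree-$\le L$ coefficients of $Hv$, namely $\mathcal{H}_L v_{[0,L]}$, to vanish; the reverse follows from the dimension count $\dim \mathrm{im}\,\mathcal{G}_L = (L+1)k = (L+1)n - (L+1)(n-k) = \dim \ker\mathcal{H}_L$, which uses that $H$ is left prime and row proper so that $\mathcal{H}_L$ has full row rank. Since $\mathcal{H}_L\mathcal{G}_L = 0$ with complementary full ranks, the classical complementary-minor duality gives, for $|S| = (L+1)k$, that $\det\mathcal{G}_L[S] \neq 0$ iff $\det \mathcal{H}_L[\,\cdot\,, S^c] \neq 0$, both being equivalent to the nonexistence of a nonzero window codeword vanishing on $S$. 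As this correspondence also matches trivially-zero minors on the two sides, conditions (ii) and (iii) are equivalent.

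I expect the combinatorial lemma of the middle step to be the main obstacle: pinning down exactly which full-size minors of $\mathcal{G}_L$ are trivially zero via the Hall condition, and—more delicately—reconciling the algebraic constraint $m_0 \neq 0$ coming from the column-distance definition with the ``not trivially zero'' condition, so that admissible vanishing patterns correspond precisely to non-trivially-zero minors. This is where both the extraction of a Hall-satisfying set $S$ from an excess of zeros, and the guarantee of a kernel vector with $m_0\neq0$, must be argued carefully. By comparison, the equivalence (ii) $\Leftrightarrow$ (iii) is routine once $\mathrm{im}\,\mathcal{G}_L = \ker\mathcal{H}_L$ is in place and the standard complementary-minor duality is invoked.
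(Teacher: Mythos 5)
The first thing to note is that the paper does not prove this theorem at all: it is quoted from the reference \cite{strongly} (Gluesing-Luerssen--Rosenthal--Smarandache), so your attempt can only be measured against the standard argument given there. Your frame is the right one, and the (ii) $\Leftrightarrow$ (iii) leg is essentially complete: $\mathcal{H}_L\mathcal{G}_L=0$ plus the rank count gives $\mathrm{im}\,\mathcal{G}_L=\ker\mathcal{H}_L$, the complementary-minor duality $\det\mathcal{G}_L[S]=0\iff\det\mathcal{H}_L[\,\cdot\,,S^c]=0$ holds (both statements say that $\ker\mathcal{H}_L$ contains a nonzero vector supported in $S^c$), and the index conditions correspond under complementation, since $|S\cap\{1,\dots,sn\}|\le sk$ iff $|S^c\cap\{1,\dots,sn\}|\ge s(n-k)$.

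The genuine gap is in (i) $\Leftrightarrow$ (ii), at exactly the two steps you flag but leave open; and the mechanisms you propose would fail if executed, because the equivalence cannot be established at the single window length $L$ in isolation. (a) If a non-trivially-zero minor $\det\mathcal{G}_L[S]$ vanishes, there need not exist \emph{any} kernel vector with $m_0\neq 0$: already for $L=1$, take $S$ to consist of $k$ rows of block row $0$ and $k$ rows of block row $1$, so that $\mathcal{G}_1[S]=\left(\begin{smallmatrix} A & 0\\ B & C\end{smallmatrix}\right)$ with $A,C$ square submatrices of $G_0$; if $A$ is invertible and $C$ is singular, the minor vanishes while every kernel vector has $m_0=0$. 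The standard repair is not to produce such a vector but to shift: if $m_i$ is the first nonzero block, then $v_{[i,L]}=\mathcal{G}_{L-i}m_{[i,L]}$ is a window codeword with nonzero leading coefficient whose weight is at most $(L+1-i)(n-k)$ by the Hall inequality $|S\cap\{1,\dots,in\}|\le ik$; this contradicts maximality of $d_{L-i}^C$, not of $d_L^C$, so one must invoke the paper's monotonicity remark and prove the statement for all window lengths $j\le L$ simultaneously. (b) Symmetrically, a window codeword with $v_0\neq 0$ and at least $(L+1)k$ zeros need not admit a Hall-satisfying subset of its zero set, because the zeros may be concentrated in the early blocks; no greedy or matroid selection can exclude this at level $L$. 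What saves the argument is an induction on $j$: once $d_0^C,\dots,d_{j-1}^C$ are known to be maximal, every prefix weight satisfies $wt(v_{[0,s-1]})\ge s(n-k)+1$, hence the zero set meets $\{1,\dots,sn\}$ in at most $sk-1$ positions and \emph{every} $(j+1)k$-subset of it satisfies Hall automatically; one also needs that non-trivially-zero minors of $\mathcal{G}_j$ extend to non-trivially-zero minors of $\mathcal{G}_L$ (pad with rows whose lower-right block has determinant a power of a nonzero $k\times k$ minor of $G_0$), so that hypothesis (ii) can be applied at every level. This induction-over-$j$ architecture is the missing idea. Finally, the converse half of your Hall lemma --- that the Hall condition makes the minor not identically zero as a polynomial in the \emph{repeated} Toeplitz blocks $G_1,\dots,G_L$, with $G_i\equiv 0$ for $i>\mu$ held fixed --- is itself substantive (a nonzero Leibniz diagonal does not preclude cancellation when the same entries occur in many positions) and is indispensable in direction (b), so it cannot be black-boxed.
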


\begin{remark}\ \\
The not trivially zero full size minors of $\mathcal{H}_L$ are exactly those which are formed by columns with indices $1\leq j_1<\cdots<j_{(L+1)(n-k)}$ which fulfil $j_{s(n-k)}\leq sn$ for $s=1,\hdots,L$.
\end{remark}

At the end of this section, we introduce reverse MDP convolutional codes, which are advantageous for use in forward and backward decoding algorithms \cite{vp}.

\begin{definition}\cite{h}\ \\
Let $\mathfrak{C}$ be an $(n,k,\delta)$ convolutional code with right prime minimal generator matrix $G$. Set $\overline{g_{ij}(z)}:=z^{\delta_j}g_{ij}(z^{-1})$. Then, the code $\overline{\mathfrak{C}}$ with generator matrix $\overline{G}$ is also an $(n,k,\delta)$ convolutional code, which is called the \textbf{reverse code} to $\mathfrak{C}$.\\
It holds: $v_0+\cdots+v_dz^d\in\overline{\mathfrak{C}}\ \Leftrightarrow\ v_d+\cdots+v_0z^d\in\mathfrak{C}$.
\end{definition}

\begin{definition}\cite{vp}\ \\
Let $\mathfrak{C}$ be an MDP convolutional code. If $\overline{\mathfrak{C}}$ is also MDP, $\mathfrak{C}$ is called \textbf{reverse MDP} convolutional code.
\end{definition}

\begin{remark}\cite{vp}\ \\
Let $(n-k)\mid\delta$ and $H(z) = H_0 + \cdots +H_{\nu}z^{\nu}$ be a left prime and row proper parity-check matrix of the MDP code $\mathfrak{C}$. Then the reverse
code $\overline{\mathfrak{C}}$ has parity-check matrix $\overline{H(z)} = H_{\nu} +\cdots +H_0z^{\nu}$. Therefore, $\mathfrak{C}$ is reverse MDP if and only if every full size minor of the matrix
$$\mathfrak{H}_L:=\left[\begin{array}{ccc} H_{\nu} & \cdots & H_{\nu-L}\\  & \ddots & \vdots \\ 0 &  & H_{\nu} \end{array}\right]$$
formed from the columns with indices $j_1,\hdots,j_{(L+1)(n-k)}$
with $j_{s(n-k)+1} > sn$, for $s = 1,\hdots,L$ is nonzero.
\end{remark}

\section{Complete MDP convolutional codes}

In the beginning of this section, we introduce complete MDP convolutional codes, which are even more advantageous for decoding than reverse MDP convolutional codes \cite{vp}.

\begin{definition}\cite{vp}\label{com}\ \\
Let $H(z)=H_0+H_1z+\cdots H_{\nu}z^{\nu}\in\mathbb F[z]^{(n-k)\times n}$ be a parity-check matrix of the convolutional code $\mathfrak{C}$ of rate $k/n$. Set $L:=\lfloor\frac{\delta}{n-k}\rfloor+\lfloor\frac{\delta}{k}\rfloor$. Then
\begin{align}\label{ppc}
\mathfrak{H}:=\left(\begin{array}{ccccc}
H_{\nu} & \cdots & H_0 &   & 0 \\ 
  & \ddots &   & \ddots &   \\ 
0 &   & H_{\nu} & \cdots & H_0
\end{array}\right)  \in\mathbb F^{(L+1)(n-k)\times (\nu+L+1)n}
\end{align}
is called \textbf{partial parity-check matrix} of the code. Moreover, $\mathfrak{C}$ is called \textbf{complete MDP} convolutional code if for any of its parity-check matrices $H$, every full size minor of $\mathfrak{H}$ which is not trivially zero is nonzero.
\end{definition}

\begin{remark}\cite{vp}\ \\
Every complete MDP convolutional code is a reverse MDP convolutional code.
\end{remark}

As for $\mathcal{H}_L$ - when considering MDP convolutional codes - and additionally for $\mathfrak{H}_L$ - when considering reverse MDP convolutional codes - one could describe the not trivially zero full size minors of the partial parity-check matrix $\mathfrak{H}$ by conditions on the indices of the columns one uses to form the corresponding minor.

\begin{lemma}\cite{vp}\label{index}\ \\
A full size minor of $\mathfrak{H}$ formed by the columns $j_1,\hdots,j_{(L+1)(n-k)}$ is not trivially zero if and only if 
\begin{itemize}
\item[(i)]
$j_{(n-k)s+1}>sn$
\item[(ii)]
$j_{(n-k)s}\leq sn+\nu n$
\end{itemize}
for $s=1,\hdots,L$.
\end{lemma}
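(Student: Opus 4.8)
The plan is to reduce the (non-)triviality of a full-size minor of $\mathfrak H$ to a purely combinatorial condition on the support pattern of the band, and then to read off (i) and (ii) from a marriage argument. Write $m:=n-k$ and group the $(L+1)m$ rows of $\mathfrak H$ into $L+1$ \emph{row-blocks}, the $s$-th of them ($s=0,\dots,L$) consisting of the $m$ rows in which $H_\nu,\dots,H_0$ occupy block-columns $s+1,\dots,s+\nu+1$. Thus every free entry of row-block $s$ lies in the column interval $I_s:=[sn+1,(s+\nu+1)n]$, and outside $I_s$ the whole row-block is structurally zero. Fix columns $j_1<\cdots<j_{(L+1)m}$ and let $M$ be the resulting square submatrix. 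Expanding $\det M$ by the Leibniz formula, every term that is not identically zero corresponds to a \emph{transversal}: a perfect matching of the rows of $M$ to its columns using only free (in-band) entries. If no transversal exists, every Leibniz term carries a structural zero and the minor is trivially zero; so the easy half of the equivalence is that non-triviality forces the existence of a transversal, and the whole lemma reduces to showing that a transversal exists if and only if (i) and (ii) hold.

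For necessity I would count. The $(L-s+1)m$ rows lying in row-blocks $s,\dots,L$ can only be matched into columns $>sn$, since all their supports start at $sn+1$ or later; hence at most $sm$ chosen columns can be $\le sn$, which is exactly $j_{ms+1}>sn$, i.e.\ (i). Dually, the $sm$ rows in row-blocks $0,\dots,s-1$ can only be matched into columns $\le(s+\nu)n$, the largest right endpoint among their supports being that of $I_{s-1}$; hence at least $sm$ chosen columns are $\le(s+\nu)n$, which is exactly $j_{ms}\le sn+\nu n$, i.e.\ (ii). For sufficiency I would produce an explicit transversal directly: assign to row-block $s$ the block $C_s$ of chosen columns of ranks $ms+1,\dots,m(s+1)$. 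Conditions (i) and (ii) say precisely that $\min C_s=j_{ms+1}>sn$ and $\max C_s=j_{m(s+1)}\le(s+\nu+1)n$, that is, $C_s\subseteq I_s$ (the boundary cases $s=0$ and $s=L$ holding trivially); hence matching row-block $s$ into $C_s$—equivalently, selecting the main diagonal of $M$—is a valid transversal, and no appeal to the general form of Hall's theorem is needed.

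The remaining, and main, obstacle is the converse of the first step: a transversal yields a \emph{non-vanishing} determinant only if no cancellation occurs, and cancellation is a genuine danger here because $\mathfrak H$ is block-Toeplitz—the free variables of $H_0,\dots,H_\nu$ repeat along the diagonals, so distinct transversals can produce the same monomial. Since ``not trivially zero'' only requires one choice of the $H_i$ with $\det M\ne0$, I would resolve this by exhibiting such a choice rather than controlling the full polynomial: specialize the free variable in row $p$, column $q$ of $H_i$ to $x^{e(i,p,q)}$ for a generic integer weighting, and read $\det M$ as a polynomial in $x$. The extremal-degree coefficient receives contributions only from transversals of extremal weight, and I expect the batched diagonal transversal above to be extremal; the delicate point to verify is that all transversals realizing this extremal monomial carry the same sign—a direct check shows the obstruction cannot arise from a transposition, which is the mechanism that would otherwise cause cancellation—so their contributions add to something nonzero. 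Alternatively one can bootstrap the non-cancellation from the one-sided characterizations already recorded in the excerpt: condition (i) is exactly the reverse-MDP condition governing the upper-triangular band $\mathfrak H_L$, and condition (ii) mirrors the MDP condition governing the lower-triangular band $\mathcal H_L$ shifted by $\nu$ block-columns, so the two-sided statement can be assembled from those. I expect this no-cancellation step, and not the counting above, to be where the real work lies.
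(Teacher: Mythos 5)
A preliminary remark: the paper contains no proof of Lemma \ref{index} at all; it is quoted from \cite{vp} and used as a black box, so your proposal has to be judged on its own terms rather than against an in-paper argument. The combinatorial part of what you do is correct and complete: if no transversal exists, every Leibniz term contains a structural zero and the minor is trivially zero; your counting argument correctly shows that a transversal forces (i) and (ii); and conditions (i), (ii) do make the order-preserving assignment $C_s\subseteq I_s$ a transversal (your handling of the boundary cases $s=0$ and $s=L$ is right). So you have correctly reduced the lemma to the one implication that remains: existence of a transversal, equivalently (i) and (ii), implies the minor is not trivially zero, i.e.\ that some choice of $H_0,\dots,H_\nu$ makes it nonzero.

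That implication is exactly where your proposal stops being a proof, as you yourself anticipate, and the arguments you sketch there do not close it. Ruling out transpositions does not rule out cancellation: two transversals cancel whenever they share a monomial and their relative permutation is odd, and odd permutations need not be transpositions (a $4$-cycle already qualifies). The generic-weighting idea cannot repair this on its own, because the weight of a transversal is a function of its monomial alone; generic weights merely isolate one extremal monomial $\mu^*$, after which you face the original problem of showing that the signed number of transversals realizing $\mu^*$ is nonzero. The fallback of assembling the statement from the characterizations of the not trivially zero minors of $\mathcal H_L$ and $\mathfrak H_L$ is also not viable as stated: those are themselves quoted without proof, and a minor of $\mathfrak H$ engages both edges of the band simultaneously. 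The gap can, however, be closed along the lines you intuit. The variable occupying position (row $ms+p$, column $c$) of $\mathfrak H$ determines the quantity $c-\tfrac{n}{m}(ms+p)$ (a ``diagonal index''), since a fixed entry of a fixed $H_i$ recurs only along shifts by $m$ rows and $n$ columns. Hence two transversals with the same monomial have the same multiset of diagonal indices. For $\tau=\sigma_0\circ\pi$ this multiset is $\bigl\{j_{\pi(r)}-\tfrac{n}{m}r\bigr\}_r$, and its equality with the multiset for $\sigma_0$ forces $\sum_r r\,j_{\pi(r)}=\sum_r r\,j_r$, which by the strict rearrangement inequality (both sequences strictly increasing) gives $\pi=\mathrm{id}$. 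So the order-preserving transversal is the unique one realizing its monomial, that monomial has coefficient $\pm1$, and the determinant is a nonzero polynomial. Alternatively, you could bypass the polynomial argument entirely by exhibiting a witness: Lemma \ref{X} together with the column-index translation carried out in the paper's first construction (or the proposition quoted from \cite{dr16}) produces an explicit $H$ for which every minor whose columns satisfy (i) and (ii) is nonzero.
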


The following lemma enables us to show the existence and genericity of complete MDP convolutional codes in Section 3.1 as well as to provide concrete constructions in Section 3.2 by considering only the not trivially full size minors of a matrix $\mathfrak{H}$ of the form \eqref{ppc}.

\begin{lemma}\label{htoc}\ \\
Let $H(z)=H_0+H_1z+\cdots H_{\nu}z^{\nu}\in\mathbb F[z]^{(n-k)\times n}$ be such that each full size minor of $\mathfrak{H}$ as in \eqref{ppc} which is not trivially zero is nonzero. Then $H$ is a row proper parity-check matrix of an $(n,k,\delta)$ complete MDP convolutional code, where $\delta=\nu(n-k)$. In particular, for an $(n,k,\delta)$ complete MDP convolutional code, it always holds $(n-k)\mid\delta$.
\end{lemma}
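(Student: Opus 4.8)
The plan is to extract from the single minor hypothesis the three structural facts that make $H$ a parity-check matrix of the asserted code --- that $H$ is row proper, that it is left prime, and that its row degrees are all equal to $\nu$ --- and then to observe that the complete MDP property is essentially the hypothesis itself. First I would record the block picture: writing block-column $b$ ($=0,\dots,\nu+L$) and block-row $r$ ($=0,\dots,L$) of $\mathfrak H$, the $(r,b)$ block equals $H_{\nu-b+r}$ when $0\le b-r\le\nu$ and is zero otherwise; in particular $H_\nu$ sits on the block diagonal $(r,r)$ and $H_0$ on $(r,r+\nu)$. I would also note that the first $(L+1)n$ columns of $\mathfrak H$ reproduce the matrix $\mathfrak H_L$ of the reverse-MDP remark and the last $(L+1)n$ columns reproduce $\mathcal H_L$, so the hypothesis automatically carries the MDP and reverse-MDP minor conditions once $H$ is known to be an admissible parity-check matrix.

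To get row properness I would exhibit one explicit not-trivially-zero minor. Selecting from each block-column $b=0,\dots,L$ its first $n-k$ columns produces a block upper-triangular matrix with every diagonal block equal to the first $n-k$ columns of $H_\nu$, so the minor equals $(\det H_\nu[\,\cdot\,,1{:}n{-}k])^{L+1}$; the index string meets conditions (i) and (ii) of Lemma \ref{index} (one checks $j_{(n-k)s+1}=sn+1>sn$ and $j_{(n-k)s}=sn-k\le sn+\nu n$), so it is not trivially zero and therefore nonzero. Hence $H_\nu$ has full row rank, every row of $H$ has degree exactly $\nu$, and the highest-row-degree coefficient matrix $H_\nu$ is of full rank, i.e. $H$ is row proper. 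The mirror-image selection --- the last $n-k$ columns of block-columns $\nu,\dots,\nu+L$ --- gives $(\det H_0[\,\cdot\,,k{+}1{:}n])^{L+1}$ and likewise forces $H_0=H(0)$ to have full row rank.

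The delicate step is left primeness. The value $z=0$ is already handled by $H_0$ having full row rank, so suppose some $z_0\ne 0$ and $0\ne w\in\overline{\mathbb F}^{\,n-k}$ satisfy $w^{\!\top}H(z_0)=0$. Forming $u:=(w^{\!\top},z_0w^{\!\top},\dots,z_0^{L}w^{\!\top})$ and using $\sum_{i=0}^{\nu}z_0^{\,i}w^{\!\top}H_i=0$, a direct computation shows that block-column $b$ of $u^{\!\top}\mathfrak H$ equals $z_0^{\,b-\nu}\,w^{\!\top}H(z_0)=0$ for every $b$ in the middle range $\nu\le b\le L$ (here $L\ge\nu$, since $L=\nu+\lfloor\nu(n-k)/k\rfloor$). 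Thus the submatrix $M$ of $\mathfrak H$ built from the middle block-columns $\nu,\dots,L$ has the nonzero left null vector $u$ and is rank-deficient, so every one of its full-size minors vanishes. This contradicts the hypothesis as soon as $M$ admits a not-trivially-zero full-size minor, and this is the main obstacle: I must produce $(L+1)(n-k)$ column indices lying in $\{\nu n+1,\dots,(L+1)n\}$ that still satisfy (i) and (ii) of Lemma \ref{index}. The room is supplied by the count $(L-\nu+1)n-(L+1)(n-k)=(L+1)k-\nu n>0$, and a packing-with-shift selection meeting both inequalities is produced by a Hall-type combinatorial check; granting it, $H$ is left prime.

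With $H$ row proper and left prime of uniform row degree $\nu$, the degree remark for left-prime row-proper parity-check matrices yields $\delta=\sum(\text{row degrees})=\nu(n-k)$, so $(n-k)\mid\delta$ and the value $L=\lfloor\delta/(n-k)\rfloor+\lfloor\delta/k\rfloor$ matches Definition \ref{com}. The complete MDP property is then literally the hypothesis; to cover the ``any parity-check matrix'' clause I would note that another left-prime row-proper parity-check matrix is $UH$ with $U\in Gl_{n-k}(\mathbb F)$ (a unimodular change preserving equal row degrees must be constant), whence $\mathfrak H$ is replaced by $\operatorname{diag}(U,\dots,U)\,\mathfrak H$ and every minor is merely scaled by $(\det U)^{L+1}$, preserving the vanishing pattern. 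Finally, for the ``in particular'' statement I would run the block-diagonal $H_\nu$-minor argument on an arbitrary complete MDP code: that minor is not trivially zero, hence nonzero, so $H_\nu$ has full row rank, forcing all row degrees to equal its top degree and hence $(n-k)\mid\delta$.
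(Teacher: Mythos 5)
Your core argument coincides with the paper's own proof: the paper likewise exhibits not-trivially-zero full-size minors of $\mathfrak{H}$ involving $n-k$ of the first $n$ columns (forcing $H_{\nu}$ to have full row rank, hence all row degrees equal to $\nu$ and $H$ row proper) and $n-k$ of the last $n$ columns (forcing $H_0$, hence $H$, to have full row rank), which is exactly what your two block-triangular selections accomplish. Where you genuinely go beyond the paper is the left-primeness step. The paper's proof stops after the two rank statements and concludes $\delta=\nu(n-k)$ directly; this tacitly uses that the sum of the row degrees of a parity-check matrix equals $\delta$, which is guaranteed only when that matrix is also left prime --- a point the paper never verifies. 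That the two rank facts alone do not suffice is shown by $H(z)=(1+z)\,[\,1+z,\;1\,]$: here $H_0=[1\ 1]$ and the leading coefficient matrix $[1\ 0]$ both have full row rank, yet the kernel of $H$ is a code of degree $1$, not $\nu(n-k)=2$. Your null-vector argument on the middle block-columns closes precisely this hole, and your handling of the ``any parity-check matrix'' clause (a unimodular factor between two row proper parity-check matrices with equal row degrees must be constant, so all relevant minors are scaled by $\det(U)^{L+1}$) is likewise absent from the paper.

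The one step you leave unproven --- the ``Hall-type combinatorial check'' producing a not-trivially-zero full-size minor supported on the columns $\{\nu n+1,\dots,(L+1)n\}$ --- does need to be supplied, but it closes easily and no Hall-type argument is needed: simply take the consecutive columns $j_l=\nu n+l$ for $l=1,\dots,(L+1)(n-k)$. Condition (ii) of Lemma \ref{index} becomes $\nu n+s(n-k)\le sn+\nu n$, which is trivial, and condition (i) becomes $\nu n+s(n-k)+1>sn$, i.e.\ $sk\le\nu n$, which holds for all $s\le L$ because $Lk=\nu k+\lfloor\nu(n-k)/k\rfloor k\le\nu k+\nu(n-k)=\nu n$; the same inequality, in the equivalent form $\nu n\le (L+1)k$, shows that the largest chosen index $\nu n+(L+1)(n-k)$ is at most $(L+1)n$, so all chosen columns do lie in the middle range. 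With this selection made explicit, your proof is complete and in fact more careful than the one in the paper.
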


\begin{proof}\ \\
If one sets $s=L$ in part (ii) of Lemma \ref{index}, one sees that there are not trivially zero full size minors of $\mathfrak{H}$ that are formed by a set of columns which contains $n-k$ of the last $n$ columns. Therefore, $H_0$ is of full row rank, which implies that $H\in\mathbb F[z]^{(n-k)\times n}$ is of full row rank. Hence $H$ is the parity-check matrix of a convolutional code with rate $k/n$.\\
If one sets $s=1$ in part (i) of Lemma \ref{index}, one obtains that there are not trivially zero full size minors of $\mathfrak{H}$ that are formed by a set of columns which contains $n-k$ of the first $n$ columns. Thus, $H_{\nu}$ has full row rank. In particular, it contains no row that consists only of zeros and hence, all $n-k$ row degrees of $H$ are equal to $\nu$. Consequently, $\delta=\nu(n-k)$.
\end{proof}

\subsection{Existence and genericity of complete MDP convolutional codes}

The existence of MDP and reverse MDP convolutional codes for all code parameters has been proven in \cite{mdp} and \cite{vp}, respectively, by showing that the sets of these codes are Zariski open in the quasi-projective variety of all non-catastrophic $(n,k,\delta)$ convolutional codes. Moreover, this implies that the sets of MDP and reverse MDP convolutional codes form generic subsets of this variety. In the following, we show that this is also true for complete MDP convolutional codes.

\begin{theorem}\ \\
Let $n,k,\delta\in\mathbb N$ with $k<n$ and $(n-k)\mid\delta$. 
Then, the set of all $(n,k,\delta)$ complete MDP convolutional codes forms a generic subset of the variety of all non-catastrophic $(n,k,\delta)$ convolutional codes.
In particular, there exists an $(n,k,\delta)$  complete MDP convolutional code over a sufficiently large base field. 
\end{theorem}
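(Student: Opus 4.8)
The plan is to convert the intrinsic statement about codes into a statement about the coefficients of a parity-check matrix, where Lemma \ref{htoc} does almost all of the work. Since $(n-k)\mid\delta$, set $\nu:=\delta/(n-k)$. By Lemma \ref{htoc}, \emph{any} $H(z)=H_0+\cdots+H_\nu z^\nu\in\mathbb F[z]^{(n-k)\times n}$ for which every not-trivially-zero full size minor of the associated partial parity-check matrix $\mathfrak H$ is nonzero is automatically a row proper parity-check matrix of an $(n,k,\delta)$ complete MDP code. So it suffices to produce such an $H$ and to control how large the set of such $H$ is. I would therefore regard the tuple $(H_0,\ldots,H_\nu)$ as a point of the affine space $\mathbb A^N$ with $N=(\nu+1)(n-k)n$, every entry of every $H_i$ being a free coordinate.

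Next I would observe that being complete MDP is a finite conjunction of open conditions. By Lemma \ref{index} there are only finitely many admissible index sets $j_1<\cdots<j_{(L+1)(n-k)}$ satisfying conditions (i) and (ii), hence only finitely many not-trivially-zero full size minors $m_1,\ldots,m_r$ of $\mathfrak H$, each a polynomial in the coordinates of $\mathbb A^N$. By definition, ``not trivially zero'' means exactly that $m_\alpha$ is not the identically zero polynomial, and this is precisely what makes the internal Toeplitz repetition of the blocks $H_i$ inside $\mathfrak H$ harmless at this point. The complete-MDP locus is then $U:=\{(H_0,\ldots,H_\nu)\mid m_1\cdots m_r\neq 0\}$, the complement of the zero set of $p:=m_1\cdots m_r$. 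Since $\mathbb F[x_1,\ldots,x_N]$ is an integral domain and each $m_\alpha\not\equiv 0$, the product $p$ is not identically zero, so $U$ is a nonempty Zariski-open subset of $\mathbb A^N$. Over the algebraic closure this already yields a point of $U$; over a finite field it suffices that $|\mathbb F|$ exceed the total degree of $p$, so that the nonzero polynomial $p$ has a non-root (e.g.\ by Schwartz-Zippel). Together with Lemma \ref{htoc} this gives the existence statement over a sufficiently large base field.

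Finally I would transfer genericity from $\mathbb A^N$ to the variety $V$ of non-catastrophic $(n,k,\delta)$ convolutional codes, following the template already used for MDP and reverse MDP codes in \cite{mdp,vp}. Because the complete-MDP property is intrinsic to the code, the parametrizing map $\phi$ sending a coefficient tuple to its code pulls the complete-MDP codes back to exactly $U$; and since complete MDP is the reverse-MDP locus intersected with the extra open conditions (ii) of Lemma \ref{index} not already present for $\mathfrak H_L$, it sits inside the reverse-MDP set, which is itself generic by \cite{vp}. Invoking the irreducibility of $V$ (it is rationally parametrized by such coefficient tuples) together with dominance of $\phi$, an intersection of nonempty Zariski-open subsets of $V$ is again nonempty and open, i.e.\ generic. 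The routine bookkeeping is the enumeration of admissible minors and the field-size estimate; the genuine obstacle is the last step, namely justifying that openness and nonemptiness computed in the coefficient space $\mathbb A^N$ truly descend to the intrinsic variety $V$ despite the strong non-uniqueness of parity-check matrices. This is where I would spend the most care, either by establishing that $\phi$ is open with $\phi^{-1}(S)=U$, or by reducing directly to the reverse-MDP descent already carried out in \cite{vp}.
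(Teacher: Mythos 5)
Your proposal is correct and follows essentially the same route as the paper: both pass to the affine space of coefficient tuples $(H_0,\ldots,H_\nu)$ with $\nu=\delta/(n-k)$, note that each not-trivially-zero full size minor of $\mathfrak{H}$ is a nonzero polynomial whose nonvanishing locus is nonempty and Zariski open, take the finite intersection (your product $m_1\cdots m_r$ is the same thing), and invoke Lemma \ref{htoc} to conclude. The descent step you flag as delicate is handled in the paper exactly along the lines you sketch, by restricting to the Zariski open, dense subset of non-catastrophic codes whose parity-check matrices have all row degrees equal to $\nu$ (citing \cite{s01}) and then transferring openness and nonemptiness from the coefficient space to that subvariety.
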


\begin{proof}\ \\
The set of non-catastrophic $(n,k,\delta)$ convolutional codes with parity-check matrix $H$ whose row degrees are all equal to $\nu:=\frac{\delta}{n-k}$ is Zariski open and therefore dense in the set of all non-catastrophic $(n,k,\delta)$ convolutional codes; see e.g. \cite{s01}. Hence, we could assume that $H$ has this so-called "generic" row degrees.\\
Consider the set of all polynomial matrices $H\in\overline{\mathbb F}[z]^{(n-k)\times k}$ with $\deg(H)\leq\nu$. For each choice of columns from $\mathfrak{H}$ such that the index conditions from Lemma \ref{index} are fulfiled, there is $H\in\overline{\mathbb F}^{(n-k)\times k}[z]$ with $\deg(H)\leq\nu$ such that the corresponding minor of $\mathfrak{H}$ is nonzero. Moreover, similarly to the proof for Theorem 2.7 of \cite{mdp}, one could argue that the set of such matrices $H$ for which this minor is nonzero is Zariski open in the set of all $H\in\overline{\mathbb F}^{(n-k)\times k}[z]$ with $\deg(H)\leq\nu$ since the entries of the coefficient matrices of $H$ fulfil a polynomial equation in $\overline{\mathbb F}[x_1,\hdots,x_{(\nu+1)(n-k)n}]$ if the minor is zero.\\
Forming the intersection of all non-empty and Zariski open sets that correspond to a minor whose columns fulfil the index conditions of Lemma \ref{index}, one gets that the set of all $H\in\overline{\mathbb F}^{(n-k)\times k}[z]$ with $\deg(H)\leq\nu$ for which all not trivially zero minors are nonzero is non-empty and Zariski open in the set of all $H\in\overline{\mathbb F}^{(n-k)\times k}[z]$ with $\deg(H)\leq\nu$. All matrices within this non-empty and open set are parity-check matrices of an $(n,k,\delta)$ complete MDP convolutional code with all row degrees equal to $\nu$ (see Lemma \ref{htoc}) and this set is also non-empty and open in the set of all $H\in\overline{\mathbb F}^{(n-k)\times k}[z]$ which are parity-check matrices of an $(n,k,\delta)$ convolutional code with all row degrees equal to $\nu$.
Therefore, the set of all $(n,k,\delta)$ complete MDP convolutional codes forms a generic subset of all non-catastrophic $(n,k,\delta)$ convolutional codes. Furthermore, this implies that there exists a sufficiently large field $\mathbb F$ for which an $(n,k,\delta)$ complete MDP  convolutional code exists.
\end{proof}

\begin{remark}\ \\
Since the set of all non-catastrophic $(n,k,\delta)$ convolutional codes is Zariski open in the variety of all $(n,k,\delta)$ convolutional codes, the set of all $(n,k,\delta)$ complete MDP convolutional codes is also a generic subset of the variety of all $(n,k,\delta)$ convolutional codes.
\end{remark}

\subsection{Construction of complete MDP convolutional codes}

The proof for the existence (and genericity) of complete MDP convolutional codes for all code parameters in the preceding subsection was non-constructive. In this subsection, we will present two concrete construction techniques for complete MDP convolutional codes. These work for all code parameters but require that the size of the underlying field is sufficiently large.\\
For the first construction, we apply the following lemma, which considers matrices over $\mathbb Z$, and use that these matrices could also be viewed as matrices over $\mathbb F_p$ if the characteristic $p$ is sufficiently large.

\begin{lemma}\cite{strongly}\label{X}\ \\
For $a,b\in\mathbb N$ with $b<a$, let $X:=\left[\begin{array}{cccc}
1 & &  & 0 \\ 
1 & \ddots &  \\ 
 & \ddots & \ddots \\ 
0 &  & 1 & 1
\end{array}\right]\in\mathbb Z^{a\times a}$ and $\hat{X}:=(X^b)^{i_1,\hdots,i_r}_{j_1,\hdots,j_r}$ be constructed out of rows $1\leq i_1<\cdots<i_r\leq a$ and columns $1\leq j_1<\cdots<j_r\leq a$ of $X^b=\left[\begin{array}{ccccccc}
1 &   &   &   &   & 0 \\ 
\binom{b}{1} & \ddots &  &   &  &  \\ 
\vdots & \ddots & \ddots  &  &   &   \\ 
\binom{b}{b-1} &  & \ddots & \ddots &   &   \\ 
1 & \ddots &  & \ddots   & \ddots  &   \\ 
& \ddots & \ddots & & \ddots & \ddots \\
0  & & 1 & \binom{b}{b-1} & \cdots & \binom{b}{1} & 1
\end{array}\right]$. Then, $\det(\hat{X})\geq 0$ and $\det(\hat{X})>0\Leftrightarrow j_l\in\{i_l-b,\hdots,i_l\}$ for $l=1,\hdots,r$.
\end{lemma}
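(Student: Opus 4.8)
The plan is to read off the entries of $X^b$ explicitly and then interpret the minor $\det(\hat X)$ combinatorially through the Lindström--Gessel--Viennot (LGV) lemma. First I would write $X = I + N$, where $N \in \mathbb{Z}^{a \times a}$ is the nilpotent matrix with ones on the first subdiagonal. Since $N^m$ has ones on the $m$-th subdiagonal, the binomial theorem gives $X^b = \sum_{m=0}^b \binom{b}{m} N^m$, so that $(X^b)_{i,l} = \binom{b}{i-l}$ with the convention $\binom{b}{m}=0$ for $m<0$ or $m>b$; this matches the displayed form of $X^b$. Consequently the submatrix has entries $\hat X_{p,q} = \binom{b}{i_p - j_q}$.

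Next I would realise each entry as a path count. In the planar layered digraph with vertex set $\{0,\dots,b\}\times\mathbb{Z}$ and edges $(t,h)\to(t+1,h)$ and $(t,h)\to(t+1,h+1)$, the number of directed paths from the source $A_q=(0,j_q)$ to the sink $B_p=(b,i_p)$ equals $\binom{b}{i_p-j_q}$, since a path is determined by choosing which of its $b$ steps increase the height, and exactly $i_p-j_q$ of them must. Because $j_1<\cdots<j_r$ and $i_1<\cdots<i_r$, the sources and sinks sit on the left and right boundaries in compatible order, so this is the standard planar LGV situation. The LGV lemma then yields $\det(\hat X) = \#\{\text{families of vertex-disjoint paths } A_q\to B_q,\ q=1,\dots,r\}$, every disjoint family being forced to realise the identity pairing and contributing $+1$. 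In particular $\det(\hat X)\ge 0$, which is the first assertion.

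It remains to characterise strict positivity, which is the heart of the argument: $\det(\hat X)>0$ iff at least one vertex-disjoint family exists. Necessity is immediate, since a single path $A_q\to B_q$ exists only when $0\le i_q-j_q\le b$, i.e. $j_q\in\{i_q-b,\dots,i_q\}$; if this fails for some $q$ the path count $\binom{b}{i_q-j_q}$ is $0$ and no family exists. For sufficiency --- the step I expect to require the most care --- I would exhibit an explicit disjoint family under the hypothesis $0\le i_q-j_q\le b$: define the $q$-th path by the height function $h_q(t)=\min(j_q+t,\,i_q)$ for $t=0,\dots,b$. One checks $h_q(0)=j_q$, $h_q(b)=i_q$ (using $i_q\le j_q+b$), and $h_q(t+1)-h_q(t)\in\{0,1\}$, so each $h_q$ is a legitimate path. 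Since $j_q\le j_{q+1}-1$ and $i_q\le i_{q+1}-1$, the monotonicity of $\min$ gives $h_q(t)\le\min(j_{q+1}+t-1,\,i_{q+1}-1)=h_{q+1}(t)-1$, so the heights are strictly separated at every time $t$ and the paths are vertex-disjoint. Hence a family exists, $\det(\hat X)\ge 1>0$, and the biconditional follows.

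The main obstacle is precisely the sufficiency direction: ensuring that the per-index connectivity conditions $j_q\in\{i_q-b,\dots,i_q\}$ are not merely necessary but jointly realisable by non-crossing paths. The explicit ``increment as early as possible'' construction above settles this cleanly, and I would also take care to verify the compatibility of the source/sink ordering so that the LGV sign is uniformly $+1$ (equivalently, one may note that $X$, being lower bidiagonal with nonnegative entries, is totally nonnegative, so every minor of $X^b$ is $\ge 0$, reconfirming the nonnegativity independently of the path count).
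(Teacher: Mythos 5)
Your proof is correct, but there is nothing in this paper to compare it against: Lemma \ref{X} is stated here purely as a quotation from \cite{strongly}, with no proof in the text, so the ``paper's own proof'' lives in that reference. On its own merits your argument is complete and sound. The identification $(X^b)_{i,l}=\binom{b}{i-l}$ via $X=I+N$ is exactly right; viewing $\hat{X}$ as the path matrix of the planar layered network turns $\det(\hat{X})$ into the number of vertex-disjoint path families joining $(0,j_q)$ to $(b,i_q)$ by Lindstr\"om--Gessel--Viennot, which gives $\det(\hat{X})\geq 0$; necessity of $j_l\in\{i_l-b,\ldots,i_l\}$ is then immediate, and your explicit family $h_q(t)=\min(j_q+t,\,i_q)$, together with the separation estimate $h_q(t)\leq h_{q+1}(t)-1$, settles the harder sufficiency direction. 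The only step you invoke as ``standard'' rather than prove is that a vertex-disjoint family must realize the identity pairing; since this is where the sign (and hence nonnegativity) comes from, you should write out the one-line argument: if $h_1(t)<h_2(t)$ but $h_1(t+1)>h_2(t+1)$, then $h_1(t+1)\leq h_1(t)+1\leq h_2(t)\leq h_2(t+1)$, a contradiction, so two paths can exchange order only by sharing a vertex. As for comparison with the cited source: what you have reconstructed is in substance the classical Gessel--Viennot theory of binomial determinants, while the nonnegativity half can also be obtained, as your closing remark indicates, from total nonnegativity of the bidiagonal factor $X$ via Cauchy--Binet; either way, your write-up is a legitimate, self-contained replacement for the citation, and arguably more transparent, since the positivity criterion $j_l\in\{i_l-b,\ldots,i_l\}$ is exactly the condition that all the required lattice paths exist.
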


In the following, we give a general construction for $(n,k,\delta)$ complete MDP convolutional codes based on the preceding lemma. Doing this, we can assume $\nu=\frac{\delta}{n-k}$; see Lemma \ref{htoc}.

\begin{theorem}\ \\
With the notation from the preceding lemma, choose $X\in\mathbb F^{(\nu+L+1)n\times(\nu+L+1)n}$, i.e. $a:=(\nu+L+1)n$, as well as $b:=\nu n+k$. For $j=0,\hdots,L$, set $I_j=\{(\nu+j)n+k+1,\hdots, (\nu+j+1)n\}$ and $I=\bigcup_{j=0}^L I_j$.\\
Then, those rows of $X^b$ whose indices lie in $I$ form the partial parity-check matrix of an $(n,k,\delta)$  complete MDP convolutional code if the characteristic of the base field is greater than $\binom{\nu n+k}{\lfloor 1/2(\nu n+k)\rfloor}^{(n-k)(L+1)}\cdot((n-k)(L+1))^{1/2(n-k)(L+1)}$.
\end{theorem}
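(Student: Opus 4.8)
The plan is to verify the hypotheses of Lemma \ref{htoc}: once I show that the selected rows of $X^b$ form a matrix of the shape \eqref{ppc} and that every not trivially zero full size minor of it is nonzero, Lemma \ref{htoc} immediately yields that $H$ is a row proper parity-check matrix of an $(n,k,\delta)$ complete MDP code with $\delta=\nu(n-k)$. The first point is a direct computation: the entry of $X^b$ in row $i$, column $c$ equals $\binom{b}{i-c}$, so it vanishes unless $i-b\le c\le i$. Writing a selected row as $i=(\nu+s-1)n+k+m$ (with $s\in\{1,\dots,L+1\}$ indexing the group $I_{s-1}$ and $m\in\{1,\dots,n-k\}$) and a column as $c=(s-1)n+tn+p'$, the entry becomes $\binom{\nu n+k}{(\nu-t)n+k+m-p'}$, which depends only on $t,m,p'$ and not on $s$. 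Hence the restriction of $X^b$ to the rows $I$ is block-Toeplitz with blocks $(H_{\nu-t})_{m,p'}=\binom{\nu n+k}{(\nu-t)n+k+m-p'}$, i.e.\ exactly of the form \eqref{ppc}.

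The heart of the argument is a combinatorial comparison between the ``not trivially zero'' conditions of Lemma \ref{index} and the ``positive determinant'' condition of Lemma \ref{X}. I would list the elements of $I$ in increasing order as $i_1<\cdots<i_r$ with $r=(L+1)(n-k)$; writing $l=(n-k)j+m$ with $0\le j\le L$, $1\le m\le n-k$ gives $i_l=(\nu+j)n+k+m$ and, since $b=\nu n+k$, also $i_l-b=jn+m$. A full size minor of $\mathfrak H$ built from columns $j_1<\cdots<j_r$ is precisely $\det\big((X^b)^{i_1,\dots,i_r}_{j_1,\dots,j_r}\big)$, which by Lemma \ref{X} is strictly positive over $\mathbb Z$ iff $jn+m\le j_l\le(\nu+j)n+k+m$ for all $l$. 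I would then show these inequalities follow from conditions (i),(ii) of Lemma \ref{index}. For the lower bound with $j\ge1$, condition (i) at $s=j$ gives $j_{(n-k)j+1}\ge jn+1$, and monotonicity of the $j_l$ yields $j_l\ge jn+m$; for $j=0$ one only needs $j_m\ge m$, which holds as the $j_l$ are distinct positive integers. For the upper bound with $j\le L-1$, condition (ii) at $s=j+1$ gives $j_{(n-k)(j+1)}\le(\nu+j+1)n$, and monotonicity yields $j_l\le(\nu+j)n+k+m$; for $j=L$ one uses only $j_r\le a=(\nu+L+1)n$. Thus every not trivially zero minor equals a positive integer over $\mathbb Z$.

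It then remains to bound these determinants and reduce modulo $p$. Viewing $X^b$ as an integer matrix, every such minor is a positive integer $D$, so it is nonzero in a field of characteristic $p$ as soon as $p>D$. To get a uniform threshold I would apply Hadamard's inequality: each of the $r=(n-k)(L+1)$ rows of the minor has $r$ entries, each a binomial coefficient bounded by the central one $\binom{\nu n+k}{\lfloor\frac12(\nu n+k)\rfloor}$, so its Euclidean norm is at most $\sqrt{r}\,\binom{\nu n+k}{\lfloor\frac12(\nu n+k)\rfloor}$. Multiplying over the $r$ rows gives
$$D\le \binom{\nu n+k}{\lfloor\frac12(\nu n+k)\rfloor}^{(n-k)(L+1)}\cdot\big((n-k)(L+1)\big)^{\frac12(n-k)(L+1)},$$
which is exactly the stated bound. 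Hence for $p$ larger than this quantity all not trivially zero minors of $\mathfrak H$ are nonzero, and Lemma \ref{htoc} finishes the proof.

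I expect the main obstacle to be the combinatorial equivalence in the middle paragraph: it is what pins down the specific choices $b=\nu n+k$ and $I_j=\{(\nu+j)n+k+1,\dots,(\nu+j+1)n\}$, since these are precisely what make the index conditions of Lemma \ref{index} coincide with the band condition $j_l\in\{i_l-b,\dots,i_l\}$ of Lemma \ref{X}. Once that alignment is established, the structural verification and the Hadamard bound are routine.
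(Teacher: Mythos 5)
Your proposal is correct and follows essentially the same route as the paper: identify the selected rows of $X^b$ with a matrix of the form \eqref{ppc}, match the index conditions of Lemma \ref{index} with the band condition $j_l\in\{i_l-b,\hdots,i_l\}$ of Lemma \ref{X}, and bound the resulting positive integer determinants via Hadamard's inequality to get the characteristic threshold. The only (harmless) difference is that you prove just the one implication actually needed (index conditions $\Rightarrow$ band condition), and you carry it out explicitly, whereas the paper asserts the full equivalence by reformulating both conditions as $j_l\in\{l+kx,\hdots,l+kx+\nu n+k\}$ --- which is the same interval you derive.
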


\begin{proof}\ \\
Defining the partial parity-check matrix as in the theorem, one gets\\
\scriptsize
 $\mathfrak{H}=$
$$\left[\begin{array}{cccccccccccccc}
1 & \nu n+k & \hdots & \binom{\nu n+k}{n-1} & \hdots & \binom{\nu n+k}{k} & \hdots & 1 & 0 &  &  &  &  & 0 \\ 
&  &  & \vdots &  & \vdots &  &  &  &  &  & &  &  \\ 
0 & 1 & \hdots & \binom{\nu n+k}{k} & \hdots & \binom{\nu n+k}{n-1} & \hdots & \nu n+k & 1 &  &  &  &  &  \\ 
 &  &  & & \ddots &  &  &  &  & \ddots &  &  &  &  \\ 
&  &  &  &  & 1 & \nu n+k & \hdots & \binom{\nu n+k}{n-1} & \hdots & \binom{\nu n+k}{k} & \hdots & 1 & 0 \\ 
 &  &  &  &  &  &  &  & \vdots &   & \vdots &  &  & \\ 
0 &  &  &  &  & 0 & 1 & \hdots & \binom{\nu n+k}{k} & \hdots & \binom{\nu n+k}{n-1} & \hdots & \nu n+k & 1
\end{array} \right]$$

\normalsize

i.e. $H_{\nu}=\left[\begin{array}{ccccc}
1 &  & \hdots &  & \binom{\nu n+k}{n-1} \\ 
 & \ddots &  &  & \vdots \\ 
0 &  & 1 & \hdots & \binom{\nu n+k}{k}
\end{array}\right],\hdots, H_0=\left[\begin{array}{ccccc}
\binom{\nu n+k}{k} & \hdots & 1 &  & 0 \\ 
\vdots &   &  & \ddots &  \\ 
\binom{\nu n+k}{n-1}  &  & \hdots  &  & 1
\end{array}\right]$.

Write $I=\{i_1,\hdots,i_{(n-k)(L+1)}\}$ with $i_1<\cdots<i_{(n-k)(L+1)}$.
Using Lemma \ref{index} and Lemma \ref{X}, it only remains to show that the conditions $j_{(n-k)s+1}>sn$ and $j_{(n-k)s}\leq sn+\nu n$ for $s=1,\hdots,L$ are equivalent to $j_l\in\{i_l-(\nu n+k),\hdots,i_l\}$ for $l=1,\hdots,(n-k)(L+1)$. But this is true since both are equivalent to $j_l\in\{l+kx,\hdots,l+kx+\nu n+k\}$
where $x\in\{0,\hdots, L\}$ is chosen such that $l\in\{x(n-k)+1,\hdots,(x+1)(n-k)\}$.

The necessary field characteristic size is estimated similar to \cite{strongly}. The determinant of an $A\times A$ matrix with largest entry equal to $B$ is upper bounded by $B^A\cdot A^{A/2}$. Setting $B=\binom{\nu n+k}{\lfloor\frac{\nu n+k}{2}\rfloor}$ and $A=(n-k)(L+1)$ yields the stated result.
\end{proof}

\begin{remark}\ \\
\vspace{-6mm}
\begin{itemize}
\item[(i)]
The construction of the preceding theorem simply means to skip the first $\nu n+k$ rows of $X^b$ and then, alternately choose $n-k$ rows and skip $k$ rows of $X^b$. 
\item[(ii)] The bound for the size of the characteristic in the preceding theorem is not very sharp. In fact, much smaller sizes are possible; see e.g.\ the following example.
\end{itemize}
\end{remark}

In the following, we illustrate the construction technique from the preceding theorem with the help of two examples.

\begin{example}\ \\
\vspace{-6mm}
\begin{itemize}
\item[(a)] Consider the case $(n,k,\delta)=(3,2,1)$, i.e. $\nu=1$, $L=1$, $\nu n+k=5$ and $(\nu+L+1)n=9$. The parity-check matrix $H(z)=H_0+H_1z$ with $H_0=[10\ 5\ 1]$ and $H_1=[1\ 5\ 10]$ defines a $(3,2,1)$ complete MDP convolutional code over each finite field $\mathbb F_{p^n}$ with characteristic $p\not\in\{2,3,5,11\}$. If one uses the bound from the preceding theorem, one gets that $H$ defines a complete MDP convolutional code over fields with characteristic greater than $10^2\cdot 2^1=200$, which shows that this bound is not sharp.
\item[(b)]
Consider the case $(n,k,\delta)=(3,1,4)$, i.e. $\nu=2$, $L=6$, $\nu n+k=7$ and $(\nu+L+1)n=21$.
A $(3,1,4)$ complete MDP convolutional code is given through the parity-check matrix $H(z)=H_0+H_1z+H_2z^2$ with $H_0=\left[\begin{array}{ccc}
7 & 1 & 0 \\ 
21 & 7 & 1
\end{array}\right]$, $H_1=\left[\begin{array}{ccc}
35 & 35 & 21 \\ 
21 & 35 & 35
\end{array}\right]$ and $H_2=\left[\begin{array}{ccc}
1 & 7 & 21 \\ 
0 & 1 & 7
\end{array}\right]$ over a field with characteristic greater than $35^{14}\cdot 14^7 \approx 4,36\cdot 10^{29}$. One could see that the bound for the characteristic rapidly increases with the sizes of the code parameters.
\end{itemize}
\end{example}

In the second part of this subsection, we want to present a second construction technique for complete MDP convolutional codes. It also requires large field sizes but has the advantage that it works for arbitrary characteristic of the underlying field. For this construction, we need the following definition and proposition.

\begin{definition}\cite{dr16}\ \\
%
%
Let $S_n$ be the symmetric group of order $n$. The determinant of an $n\times n$ matrix $A=[a_{i,l}]$ is given by $\det(A)=\sum_{\sigma\in S_n}(-1)^{sgn(\sigma)}a_{1,\sigma(1)}\cdots a_{n,\sigma(n)}$.
We call a product of the form $a_{1,\sigma(1)}\cdots a_{n,\sigma(n)}$ with $\sigma\in S_n$ a \textbf{trivial term} of the determinant if at least one component $a_{i,\sigma(i)}$ is equal to zero.

\end{definition}


\begin{proposition}\cite[Theorem 3.3]{dr16}\ \\
Let $\alpha$ be a primitive element of a finite field $\mathbb F=\mathbb F_{p^N}$ and $B=[b_{i,l}]$ be a matrix over $\mathbb F$ with the following properties
\begin{enumerate}
\item if $b_{i,l}\neq 0$, then $b_{i,l}=\alpha^{\beta_{i,l}}$ for a positive integer $\beta_{i,l}$
\item if $b_{i,l}=0$, then $b_{i',l}=0$ for any $i'>i$ or $b_{i,l'}=0$ for any $l'<l$
\item if $l<l'$, $b_{i,l}\neq 0$ and $b_{i,l'}\neq 0$, then $2\beta_{i,l}\leq\beta_{i,l'}$
\item if $i<i'$, $b_{i,l}\neq 0$ and $b_{i',l}\neq 0$, then $2\beta_{i,l}\leq\beta_{i',l}$.
\end{enumerate}
Suppose $N$ is greater than any exponent of $\alpha$ appearing as a nontrivial term of any minor of $B$. Then $B$ has the property that each of its minors which is not trivially zero is nonzero.
\end{proposition}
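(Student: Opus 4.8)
The plan is to fix an arbitrary square submatrix of $B$, expand its determinant by the Leibniz formula, and show that exactly one of the nontrivial terms survives modulo the others. Write the submatrix as $[\,b_{r_t,c_s}\,]_{t,s=1}^m$ with $r_1<\cdots<r_m$ and $c_1<\cdots<c_m$, so that its determinant equals $\sum_{\sigma}\operatorname{sgn}(\sigma)\prod_{t=1}^m b_{r_t,c_{\sigma(t)}}$. The trivial terms vanish, and each nontrivial $\sigma$ (one for which every factor is nonzero) contributes $\pm\alpha^{E_\sigma}$ with $E_\sigma=\sum_{t}\beta_{r_t,c_{\sigma(t)}}$. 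Since $\alpha$ is primitive, its minimal polynomial over $\mathbb F_p$ has degree $N$, so $1,\alpha,\dots,\alpha^{N-1}$ form an $\mathbb F_p$-basis of $\mathbb F_{p^N}$. By hypothesis every $E_\sigma$ lies in $\{0,\dots,N-1\}$, so after collecting terms the minor equals $\sum_{e=0}^{N-1} c_e\,\alpha^e$ with $c_e=\big(\sum_{\sigma:\,E_\sigma=e}\operatorname{sgn}(\sigma)\big)\bmod p$, and this is nonzero as soon as a single $c_e$ is nonzero. Hence it suffices to produce one exponent $e$ that is attained by a unique $\sigma$; the natural candidate is the minimal exponent $E_{\min}=\min_\sigma E_\sigma$.

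I would therefore reduce the statement to the purely combinatorial claim that, among the nontrivial $\sigma$, the value $E_\sigma$ is minimized by exactly one permutation. Reading $\sigma$ as a perfect matching of rows to columns with weights $\beta_{r_t,c_s}$, this is a statement about uniqueness of a minimum-weight matching. The engine is the doubling hypothesis: conditions (3) and (4) say the exponents increase strictly to the right along rows and downward along columns, and for $t<t'$, $s<s'$ with all four entries nonzero they give the strict exchange inequality $\beta_{r_t,c_{s'}}+\beta_{r_{t'},c_s}<\beta_{r_t,c_s}+\beta_{r_{t'},c_{s'}}$, obtained from $2\beta_{r_t,c_{s'}}\le\beta_{r_{t'},c_{s'}}$, $2\beta_{r_{t'},c_s}\le\beta_{r_{t'},c_{s'}}$ and positivity of the entries. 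Thus replacing an aligned pair $(t,s),(t',s')$ of a matching by the crossed pair $(t,s'),(t',s)$ strictly lowers the weight. Consequently a minimizer can contain no aligned pair, i.e.\ it must be the order-reversing (anti-diagonal) matching $\sigma(t)=m+1-t$, which is the unique minimizer whenever this exchange is always available.

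The main obstacle is exactly that availability: the crossed positions must still be nonzero for the exchange to stay inside the valid (nontrivial) matchings, and this is where the staircase hypothesis (2) enters. One first checks that (2) is inherited by every submatrix, so we may argue inside $[\,b_{r_t,c_s}\,]$. From (2) the upper-right corner $(t,s')$ of an aligned pair is automatically nonzero (a zero there would force either $b_{r_{t'},c_{s'}}=0$ or $b_{r_t,c_s}=0$), but the lower-left corner $(t',s)$ may legitimately be zero, so a single transposition need not be feasible. I expect this to be the crux: one must upgrade the swap either to a rotation along a valid alternating cycle that avoids the zero region, or to an induction on $m$ that peels off an extreme row or column while using (2) to control which columns remain feasible, so that at each stage the minimizing choice is forced and unique. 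It is essential to aim at uniqueness of the \emph{minimum} rather than distinctness of all the $E_\sigma$: taking $\beta_{r_t,c_s}=2^{t+s-2}$ on a $3\times 3$ block already produces two permutations of equal weight, so only the minimal exponent can be guaranteed to occur with multiplicity one.
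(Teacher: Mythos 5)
The paper you are working from does not actually prove this proposition; it imports it verbatim from \cite{dr16}, so the benchmark is the proof in that reference, and the entire substance of that proof is precisely the combinatorial claim your proposal leaves open.

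What you have is correct as far as it goes. The reduction is sound: trivial terms vanish, every nontrivial term is $\pm\alpha^{E_\sigma}$ with $0<E_\sigma<N$, a primitive element of $\mathbb F_{p^N}$ has degree $N$ over $\mathbb F_p$, so the minor equals $P(\alpha)$ for an integer polynomial $P$ of degree less than $N$, and it suffices to exhibit one exponent attained by exactly one nontrivial permutation. Your insistence on the \emph{minimal} exponent is also the right call, and in fact it is forced, more strongly than your $3\times 3$ example shows: place zeros at the $(1,1)$ and $(3,3)$ corners (both allowed by condition (2), since there is nothing to the left of the first, respectively below the second) and take $\beta_{t,s}=2^{t+s}$ elsewhere. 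The three nontrivial terms then have exponents $48$, $56$, $56$, so even the \emph{maximum} is attained twice, by two $3$-cycles of equal sign, whose contributions genuinely cancel in characteristic $2$. Only the minimum can work.

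The gap is that you never prove the minimum is attained by a unique nontrivial permutation. You establish it only when the exchange move stays inside the support (all four rectangle corners nonzero), observe correctly that the lower-left corner can vanish, and then explicitly defer the blocked case to an unspecified ``rotation along a valid alternating cycle'' or ``induction on $m$.'' That deferred step is not a routine verification; it is the theorem. With the admissible zero patterns, the minimizer need not be the anti-diagonal at all (for an upper-triangular support the unique nontrivial term is the diagonal), single transpositions can be infeasible, and distinct nontrivial permutations of equal weight really occur arbitrarily close to the optimum: the matrix $\left(\begin{smallmatrix} 0 & \alpha & \alpha^2\\ \alpha & \alpha^2 & \alpha^4\\ \alpha^2 & \alpha^4 & 0\end{smallmatrix}\right)$ satisfies (1)--(4) and has two nontrivial terms of weight $7$; uniqueness of the minimum survives there only because the anti-diagonal term, of weight $6$, happens to be feasible. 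Ruling out a tie \emph{at} the minimum for every zero pattern permitted by condition (2) requires exactly the alternating-cycle or peeling analysis you sketch but do not carry out, and it is where \cite{dr16} spends its effort. As it stands, your proposal is a correct reduction of the proposition to an unproven combinatorial lemma, not a proof of it.
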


\begin{remark}\ \\
The preceding proposition even implies that each minor that is not trivially zero is nonzero, not only those of full size, which we need for our construction of complete MDP convolutional codes.
\end{remark}

The next theorem provides a general construction of complete MDP convolutional codes based on the preceding proposition.

\begin{theorem}\ \\
Let $n,k,\delta\in\mathbb N$ with $k<n$ and $(n-k)\mid\delta$ and let $\alpha$ be a primitive element of a finite field $\mathbb F=\mathbb F_{p^N}$ with $N>(L+1)\cdot 2^{(\nu+2)n-k-1}$. Then $H(z)=\sum_{i=0}^{\nu}H_iz^i$ with $H_i=\left[\begin{array}{ccc}
\alpha^{2^{in}} & \hdots & \alpha^{2^{(i+1)n-1}} \\ 
\vdots &  & \vdots \\ 
\alpha^{2^{(i+1)n-k-1}} & \hdots & \alpha^{2^{(i+2)n-k-2}}
\end{array}\right]$ for $i=0,\hdots,\nu=\frac{\delta}{n-k}$ is the parity-check matrix of an $(n,k,\delta)$ complete MDP convolutional code.
\end{theorem}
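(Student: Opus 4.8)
The plan is to reduce everything to the preceding Proposition. By Lemma \ref{htoc} it suffices to show that every full size minor of the partial parity-check matrix $\mathfrak H$ built from the $H_i$ (as in Definition \ref{com}) which is not trivially zero is nonzero; recall that here $\nu=\frac{\delta}{n-k}$ is forced. Since permuting rows or columns of a matrix multiplies each of its minors by $\pm 1$ and preserves the property of being trivially zero, I may freely replace $\mathfrak H$ by any rearrangement of its rows and columns. The whole task is therefore to exhibit a rearrangement $B$ of $\mathfrak H$ that satisfies hypotheses (1)--(4) of the Proposition, and then to verify the field-size condition for $B$.

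First I would record the exponents. The entry of the block $H_i$ in its $r$-th row and $c$-th column is $\alpha^{2^{\,in+(r-1)+(c-1)}}$, so I abbreviate its logarithmic exponent by $e:=in+(r-1)+(c-1)$; since every $\beta=2^{e}$, the hypotheses (3) and (4), which ask for $2\beta\le\beta'$, translate into ``$e$ increases by at least $1$''. In $\mathfrak H$ the blocks appear along each block-row in the order $H_\nu,H_{\nu-1},\dots,H_0$, i.e.\ with \emph{decreasing} $i$, so $e$ jumps downward whenever one crosses a block boundary from left to right; thus $\mathfrak H$ itself violates (3). The remedy is to reverse the order of the $\nu+L+1$ column-blocks, leaving the $n$ columns inside each block, and all rows, in their original positions. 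Denote the result by $B$. In $B$ the band of nonzero blocks runs from the top-right corner to the bottom-left corner, and the entry in block-row $b$ (inner row $r$) and the $p$-th column-block (inner column $c$) is the corresponding entry of $H_{p-L+b}$, with logarithmic exponent $e=(p-L+b)n+(r-1)+(c-1)$.

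Next I would verify (1)--(4) for $B$. Condition (1) is immediate, each nonzero entry being a positive power of $\alpha$. Along any fixed row the nonzero entries occupy a contiguous range of columns, and a short computation shows $e$ grows by exactly $1$ at each successive column, including across block boundaries; this yields $2\beta_{i,l}\le\beta_{i,l'}$ for $l<l'$, i.e.\ (3). Down any fixed column the nonzero entries are again contiguous, with $e$ growing by $1$ within a block-row and by $k+1$ when passing to the next block-row, so $e$ is strictly increasing and (4) holds. For the staircase condition (2), a zero lying strictly left of the band in its row has its entire row-segment to the left equal to zero, while a zero lying strictly right of the band has its entire column-segment below equal to zero (the band only moves further left as one descends); this is precisely the disjunction required in (2).

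Finally I would control the field size. A nontrivial term of a minor is a product of nonzero entries in distinct rows and columns, so its exponent of $\alpha$ is the sum of the corresponding $\beta=2^{e}$. Grouping the chosen entries by block-row and bounding each inner row $r$ of a block-row by its largest attainable exponent $\nu n+(n-1)+(r-1)$, the contribution of a single block-row is at most $\sum_{r=1}^{n-k}2^{\nu n+(n-1)+(r-1)}<2^{(\nu+2)n-k-1}$, whence any such exponent is smaller than $(L+1)2^{(\nu+2)n-k-1}<N$. The Proposition then applies to $B$, so every not-trivially-zero minor of $B$, and therefore of $\mathfrak H$, is nonzero, and Lemma \ref{htoc} shows that $H$ is the parity-check matrix of an $(n,k,\delta)$ complete MDP convolutional code. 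I expect the genuine obstacle to lie in the second paragraph: reconciling the doubling hypotheses (3),(4) with the staircase hypothesis (2) forces exactly the block-reversal above, because the naturally written $\mathfrak H$ has a ``sawtooth'' exponent pattern that fails (3), and one must check that reversing the column-blocks repairs (3) while keeping (2) and (4) intact.
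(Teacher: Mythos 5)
Your proposal is correct and follows essentially the same route as the paper: reduce via Lemma \ref{htoc} to the full size minors of $\mathfrak{H}$, reverse the order of the column blocks to obtain the anti-diagonal band matrix, verify the four hypotheses of the proposition from \cite{dr16}, and bound the exponents of nontrivial terms by $(L+1)\cdot 2^{(\nu+2)n-k-1}<N$. Your verification of conditions (1)--(4) and your per-block-row exponent estimate (a geometric sum with ratio $2$ rather than the paper's ratio $4$, but yielding the same final bound) are more detailed than, but fully consistent with, the paper's argument.
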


\begin{proof}\ \\
We have to show that each fullsize minor of the partial parity-check matrix $\mathfrak{H}$ given by \eqref{ppc} that is not trivially zero is nonzero. Permutation (reverse ordering) of the blocks of columns of $\mathfrak{H}$, which does not change the terms for the not trivially zero fullsize minors, leads to the matrix $\left[\begin{array}{ccccc}
0 &  & H_0 & \hdots & H_{\nu} \\ 
 & \text{\reflectbox{$\ddots$}} &  & \text{\reflectbox{$\ddots$}} &  \\ 
H_0 & \hdots & H_{\nu} &  & 0
\end{array}\right]$, which fulfills the four conditions of the preceding proposition. Consequently, $H$ is the parity-check matrix of a complete MDP convolutional code if $N$ is greater than any exponent of $\alpha$ appearing as a nontrivial term of any minor of $\mathfrak{H}$. The largest possible value for such an exponent is $(L+1)\cdot(2^{(\nu+2)n-k-2}+2^{(\nu+2)n-k-4}+\cdots+2^{\nu n+k})=(L+1)\cdot 2^{\nu n+k}\sum_{i=0}^{n-k-1}2^{2i}<(L+1)\cdot 2^{(\nu+2)n-k-1}$. 
\end{proof}

We conclude this section by considering examples for this second construction principle.

\begin{example}\ \\
\vspace{-6mm}
\begin{itemize}
\item[(a)]
As in the first example for the first construction technique, we construct a $(3,2,1)$ complete MDP convolutional code, i.e. $\nu=L=1$. One gets $H_0=\left[\begin{array}{ccc}
\alpha & \alpha^2 & \alpha^4 
\end{array}\right]$ and $H_1=\left[\begin{array}{ccc}
\alpha^8 & \alpha^{16} & \alpha^{32} 
\end{array} \right]$ over $\mathbb F_{p^N}$ with $N>2^7$. This requires a field size greater than $2^{128}$, which is much larger than for the first construction technique but the characteristic of the underlying field can be chosen arbitrarily.
\item[(b)]
For the construction of a $(3,1,4)$ complete MDP convolutional code, we obtain the parity-check matrix $H(z)=H_0+H_1z+H_2z^2$ with $H_0=\left[\begin{array}{ccc}
\alpha & \alpha^2 & \alpha^4 \\ 
\alpha^2 & \alpha^4 & \alpha^8
\end{array}\right]$, $H_1=\left[\begin{array}{ccc}
\alpha^8 & \alpha^{16} & \alpha^{32} \\ 
\alpha^{16} & \alpha^{32} & \alpha^{64}
\end{array} \right]$ and $H_2=\left[\begin{array}{ccc}
\alpha^{64} & \alpha^{128} & \alpha^{256} \\ 
\alpha^{128} & \alpha^{256} & \alpha^{512}
\end{array} \right]$ over $\mathbb F_{p^N}$ with $N>7\cdot2^{10}$. This leads to a field size of at least $2^{7\cdot 2^{10}}=2^{7168}$, which is again even much larger than for the other construction technique.
\end{itemize}
\end{example}

\section{Conclusion}

In this paper, the existence and genericity of complete MDP convolutional codes for all code parameters has been shown and two general construction techniques have been provided. However, these construction techniques require underlying base fields of very large sizes.
This provokes the question if it is possible to derive general constructions over fields of smaller size and what is the minimum required field size so that such a construction is possible. This problem is even unsolved for MDP not only for complete MDP convolutional codes. In \cite{b} a bound for the existence of superregular matrices was proved and used to obtain an upper bound on the necessary field size for MDP convolutional codes. But small examples show that these bounds are not very sharp. Furthermore, no constructions achieving these bounds were provided. Hence, these remain open problems for future research.

\bibliography{mybibfile}

\end{document}